\documentclass[conference,10pt]{IEEEtran}

\usepackage[margin=1in]{geometry}

\usepackage{amsmath,amssymb}
\usepackage{amsthm}

\usepackage{graphicx}
\usepackage{booktabs}
\usepackage{caption}
\usepackage{subcaption}
\usepackage{xcolor}

\usepackage{cite}

\usepackage[hidelinks]{hyperref}

\theoremstyle{plain}
\newtheorem{theorem}{Theorem}
\newtheorem{lemma}{Lemma}
\newtheorem*{lemma*}{Lemma}

\theoremstyle{definition}
\newtheorem{definition}{Definition}

\theoremstyle{remark}
\newtheorem{remark}{Remark}

\title{Local Differential Privacy with Correlated Noise Achieves Central-DP Optimal Cost}

\author{
\IEEEauthorblockN{Madhura Pathegama, Srikanth Avasarala,
Viveck R. Cadambe,
Juba Ziani}
\IEEEauthorblockA{Georgia Institute of Technology, Atlanta, GA, USA\\
Email: \texttt{\{macharige3, savasarala, viveck, jziani3\}@gatech.edu}}
}

\begin{document}
\maketitle

\begin{abstract}
We study privately estimating the sum of $n$ user-held values in the presence of an honest-but-curious server. This motivates requiring privacy not only at data release but also throughout server-side computation. We therefore adopt the local (pure) differential privacy model, in which each user transmits a noise-perturbed value. It is well known that independent local noise typically incurs a substantial utility loss compared to the centralized model, where noise is added only after aggregation.

We show that this gap is not fundamental. By carefully designing correlations among the locally added noise variables, we construct $\varepsilon$-DP mechanisms whose estimation cost matches the optimal cost achievable in the centralized setting, up to an arbitrarily small error.
\end{abstract}

\section{Introduction}

Consider $n$ users indexed by $i\in\{1,\dots,n\}$, where user $i$ holds a private datum $x_i\in\mathbb{R}$.\footnote{Our analysis extends naturally to vector-valued $x_i$ under mild conditions; see the conclusion.} 
A server aims to compute and release the aggregate sum $\sum_{i=1}^n x_i$ (or the average $\frac1n\sum_{i=1}^n x_i$) while preserving the privacy of each individual datum.
This aggregation task arises in applications such as distributed and federated learning \cite{abadi2016deep,mcmahan2017communication}, consensus algorithms \cite{huang2012differentially,nozari2015differentially}, and sensor network computations \cite{rastogi2010differentially,won2014proactive}, where users contribute local updates or statistics that must be combined while preserving individual privacy.

In the \emph{central} differential privacy (central-DP) model \cite{dwork2006calibrating}, users transmit their raw data to a trusted server, which then adds noise to the aggregate before releasing it. When the server is not trusted (e.g., it may be honest-but-curious), stronger safeguards are required.

As an alternative, \emph{local} differential privacy (local-DP) was proposed \cite{kasiviswanathan2011can,duchi2013local}. In this model, privacy is enforced before data leaves the user. We consider an additive local-noise mechanism in which each user transmits $x_i + Z_i,$
where $Z_i$ is a random noise variable. The server’s estimate of the sum becomes
\[
\sum_{i=1}^n x_i + \sum_{i=1}^n Z_i.
\]
Because noise is added locally, one expects that the server cannot reliably infer any individual $x_i$. 

In this work, we quantify privacy using pure $\varepsilon$-differential privacy ($\varepsilon$-DP) and focus exclusively on additive noise mechanisms.
Let $x,\widetilde x\in\mathbb{R}^n$ be \emph{neighboring} data vectors, meaning they differ in at most one coordinate and the change is bounded by a sensitivity parameter $\Delta>0$; i.e., there exists $i\in\{1,\dots,n\}$ such that $x_j=\widetilde x_j$ for all $j\neq i$ and $|x_i-\widetilde x_i|\le\Delta$. The $\varepsilon$-DP is then defined as follows.

\begin{definition}\label{def: epsDP}
The additive mechanism $x\mapsto x+Z$ satisfies $\varepsilon$-DP for sensitivity $\Delta$ if, for all neighboring $x,\widetilde x\in\mathbb{R}^n$ and all measurable sets $A\subseteq\mathbb{R}^n$,
\begin{align}\label{eq: epsDP0}
\Pr\!\big(x+Z\in A\big)
\;\le\;
e^{\varepsilon}\Pr\!\big(\widetilde x+Z\in A\big).
\end{align}
\end{definition}

We seek to design the joint noise vector $Z^n=(Z_1,\dots,Z_n)$ to satisfy \eqref{eq: epsDP0} while minimizing the aggregate cost, which in our setting depends only on the noise:
\begin{align}\label{eq: cost}
    \mathbb{E}\!\left[\, L\!\left(\sum_{i=1}^n Z_i\right)\right].
\end{align}
Here $L(\cdot)$ is even and nondecreasing on $\mathbb{R}_+$. A common choice is the $L_p$ loss $L(x)=|x|^p$ for $p\ge 1$.

The prevailing approach in the local differential privacy literature is to add independent noise at each user, often i.i.d.\ from a common distribution \cite{kasiviswanathan2011can,duchi2013local}. Independence greatly simplifies the privacy constraints: when the $Z_i$ are independent, satisfying \eqref{eq: epsDP0} jointly is equivalent to each $Z_i$ satisfying the corresponding one-dimensional version of \eqref{eq: epsDP0}. 


However, independent local noise typically incurs a strict utility gap relative to the centralized model \cite{duchi2013local}. 
In particular, for quadratic loss $L(x)=x^2$, the optimal independent mechanism is obtained by choosing each $Z_i$ according to the one-dimensional optimal distribution, and the resulting aggregate cost is $n$ times the cost of the corresponding centrally added-noise mechanism.

Our goal is to mitigate this utility gap by allowing correlated noise across users. 
In our setting, such correlations may be established in an offline phase or via secure channels between users before the privatized data are sent to the server. Since the noise is still added on the user side, we refer to this as a local mechanism.
Note that Definition~\ref{def: epsDP} does not require independence and continues to guarantee privacy under correlation. In particular, letting $Z_{\sim i}:=(Z_1,\dots,Z_{i-1},Z_{i+1},\dots,Z_n)$, for any measurable $A_i\subseteq\mathbb{R}$ and $B\subseteq\mathbb{R}^{n-1}$, conditioning \eqref{eq: epsDP0} on the event $\{Z_{\sim i}\in B\}$ yields
\begin{align}\label{eq: condDP}
\Pr\big(x_i+&Z_i\in A_i \mid Z_{\sim i}\in B\big)\nonumber\\
&\;\le\;
e^\varepsilon
\Pr\big(\widetilde x_i+Z_i\in A_i \mid Z_{\sim i}\in B\big).
\end{align}
Thus, even if the server knows all other users' data and noise values, it still cannot reliably distinguish whether user $i$'s datum was $x_i$ or $\widetilde x_i$.

This shows that using correlated noise additions does not compromise the privacy during the computation phase as long as the mechanism satisfies $\varepsilon$-DP condition in \eqref{eq: epsDP0}. Against this backdrop, we ask the following question.

\emph{Can we design $\varepsilon$-DP additive local mechanisms with correlated noise that significantly outperform the standard independent-noise mechanisms?}
We show that the answer is affirmative. We prove that appropriately designed correlated local mechanisms can achieve costs arbitrarily close to the optimal cost attainable under centralized $\varepsilon$-DP. 

Our approach leverages the characterization of the optimal one-dimensional $\varepsilon$-DP mechanism in \cite{geng2015optimal}: we reduce the $n$-user problem to a one-dimensional optimization along the aggregate noise direction, obtain a one-dimensional lower bound, and show it is essentially tight via a correlated construction.

For basic sum estimation under local differential privacy, prior work \cite{joseph2019role} shows that protocols with fully interactive communication, where all messages between users and the server are visible to the server, cannot essentially improve the accuracy of standard noninteractive local mechanisms. Our work does not contradict this, as we allow users to establish correlations securely (e.g., offline or over secure channels\footnote{Relying on standard computational assumptions, such channels can be realized over communication through the server \cite{diffie1976new}.}) without revealing the underlying randomness to the server, which enables strictly stronger utility guarantees.


Note that our results have direct implications for distributed summation tasks, most notably distributed mean estimation \cite{duchi2013local,asi2022optimal}, which underlies aggregation in gradient-based optimization and learning \cite{rajkumar2012differentially,song2013stochastic} as well as consensus algorithms \cite{huang2012differentially}. In particular, under quadratic loss, correlated noise can achieve an $O(1)$ error for sum estimation, whereas independent local noise leads to an $O(n)$ scaling of the loss.

\subsection{Related work}

It is worth noting that under \emph{approximate} differential privacy%
\footnote{Approximate differential privacy, also known as $(\varepsilon,\delta)$-DP, relaxes \eqref{eq: epsDP0} to $\Pr(x+Z\in A) \le e^\varepsilon \Pr(\widetilde x+Z\in A) + \delta$.}, gains from correlated noise are known in the \emph{Gaussian} setting \cite{imtiaz2018distributed,imtiaz2021correlated,allouah2024privacy,vithana2025correlated,vithana2025differentially}. 
In particular, \cite{allouah2024privacy} and \cite{vithana2025differentially} show that correlated Gaussian noise can match the performance of centrally added Gaussian noise under quadratic loss. However, these results rely on structural properties specific to Gaussian mechanisms and are tailored to quadratic loss. They do not extend to pure $\varepsilon$-DP or to general loss functions.

Our focus in this work is to show that carefully constructed correlated local noise can achieve centralized-DP utility. We do not address the efficient or secure generation of such correlations; these may be developed based on secure aggregation techniques \cite{bonawitz2017practical,chen2022fundamental,stevens2022efficient,vithana2025correlated}.

It is also worth noting that temporal correlations in added noise have been used to improve distributed gradient methods \cite{kairouz2021practical,pillutla2025correlated}, but such correlations are not relevant to our setting.

\section{Preliminaries}

\subsection{Notation}

We use the following notation throughout: for $A\subseteq\mathbb{R}^n$ and $v\in\mathbb{R}^n$, $A+v:=\{x+v:\,x\in A\}$, and for $\alpha\in\mathbb{R}$, $\alpha A:=\{\alpha x:\,x\in A\}$.

For a function $f:\mathbb{R}^n\to\mathbb{R}$, we write $f(\alpha(\cdot))$ to denote the function $x\mapsto f(\alpha x)$.
Similarly for a probability measure $P$ on $\mathbb{R}^n$, we write $P(\alpha(\cdot))$ for $P(\alpha A)$ as a function of $A$. All probability measures considered in this work are Borel measurable.

\subsection{Differential privacy}

In this work, we consider additive mechanisms in which the noise variables $Z_i$ are chosen independently of the corresponding data $x_i$. Under this assumption, the privacy requirement \eqref{eq: epsDP0} depends only on the joint distribution $P$ of the noise vector $Z$, and the mechanism is fully specified by $P$.

To make this observation explicit, define the set of allowable coordinate-wise perturbations \vspace{-1mm}
\[
T_n^\Delta
\;:=\;
\bigcup_{i=1}^n \{0\}^{i-1}\times[-\Delta,\Delta]\times \{0\}^{n-i},
\]
i.e., the zero-centered union of length-$2\Delta$ line segments along each coordinate axis. Then $x,\widetilde x\in\mathbb{R}^n$ are neighboring (for sensitivity~$\Delta$) if and only if $\widetilde x=x+t$ for a unique $t\in T_n^\Delta$. Hence \eqref{eq: epsDP0} is equivalent to requiring that, for all $x\in\mathbb{R}^n$, all $t\in T_n^\Delta$, and all measurable $A\subseteq\mathbb{R}^n$,
\[
\Pr\!\big(x+Z\in A\big)
\;\le\;
e^{\varepsilon}\Pr\!\big(x+t+Z\in A\big).
\]
Equivalently \vspace{-2 mm}
\begin{align}\label{eq: epsDP}
P(A)\;\le\; e^{\varepsilon}\,P(A+t).
\end{align}

Accordingly, when the mechanism $x\mapsto x+Z$ is $\varepsilon$-DP for sensitivity $\Delta$, we will also say (by abuse of notation) that the noise law $P$ itself is $\varepsilon$-DP for sensitivity $\Delta$.

\begin{definition}\label{def: epsDP_noise}
Given $\varepsilon>0$ and $\Delta>0$, a probability measure $P$ on $\mathbb{R}^n$ satisfies $\varepsilon$-DP for sensitivity $\Delta$ if, for all measurable $A\subseteq\mathbb{R}^n$ and all $t\in T_n^\Delta$, \eqref{eq: epsDP} is satisfied.

\end{definition}

\subsection{Cost function}

The utility in our setting is quantified by the cost in \eqref{eq: cost}. If the noise vector $Z=(Z_1,\dots,Z_n)$ is distributed according to $P$  then the cost can be written as
\[
\int_{\mathbb{R}^n} L\!\left(\sum_{i=1}^n z_i\right)\,P(dz_1\cdots dz_n).
\]

We impose the following mild and natural assumptions on the loss function $L:\mathbb{R}\to\mathbb{R}$: \textbf{(A1)} $L$ is even; \textbf{(A2)} $L$ is nondecreasing on $\mathbb{R}_+$; and \textbf{(A3)} $L$ has at most subexponential growth, i.e.,
\vspace{-1mm}
\[
L(x)=\exp(o(|x|))\qquad \text{as } |x|\to\infty.
\]

Since both privacy and utility are determined by the noise law $P$ of $Z=(Z_1,\dots,Z_n)$, our focus is to solve the following optimization problem. 
\begin{align}\label{eq: opt_problem}
\inf_{P}
\int_{\mathbb{R}^n} L\!\left(\sum_{i=1}^n z_i\right)\,P(dz),
\end{align}
where optimization is over all $P$ satisfying $\varepsilon$-DP for sensitivity $\Delta$.
Our main result (Theorem~\ref{thm: main}) resolves \eqref{eq: opt_problem} by matching its optimum to the centralized benchmark and showing it is approachable under local DP using correlated noise. 

Although the cost in \eqref{eq: opt_problem} is defined in terms of an $n$-dimensional noise measure $P$, it is useful to also consider costs associated with one-dimensional probability measures, as much of the analysis reduces to a scalar problem. Accordingly, for a cost function $L:\mathbb{R}\to\mathbb{R}$ and a probability measure $Q$ on $\mathbb{R}$, we define the associated cost
\[
C(L,Q):= \int_{\mathbb{R}} L(x)\,Q(dx).
\]

Let $\mathcal{P}_n(\varepsilon,\Delta)$ denote the class of Borel probability measures on $\mathbb{R}^n$ that satisfy $\varepsilon$-differential privacy for sensitivity $\Delta$. We define the optimal cost (over one-dimensional noise measures) as
\begin{align}\label{eq: minimize}
    C^*_{\varepsilon,\Delta}(L)
    \;\triangleq\;
    \inf_{Q \in \mathcal{P}_1(\varepsilon,\Delta)} C(L,Q).
\end{align}

It was shown in \cite{geng2015optimal} that for any cost function $L$ satisfying Assumptions \textbf{(A1)}--\textbf{(A3)}, an optimal noise distribution exists and admits a density, which we denote by $f^*_{L,\varepsilon,\Delta}$. The optimal value of the objective can therefore be written as \vspace{-1mm}
\begin{align*}
    C^*_{\varepsilon,\Delta}(L)
    \;=\;
    \int_{\mathbb{R}} L(x)\, f^*_{L,\varepsilon,\Delta}(x)\, dx .
\end{align*}

More specifically, \cite{geng2015optimal} shows that the optimal density $f^*_{L,\varepsilon,\Delta}$ belongs to the parametric family of \emph{staircase mechanisms} $\{f_\gamma^{(\varepsilon,\Delta)}:\gamma\in[0,1]\}$. We use these functions to prove the achievability part of our main theorem; see Appendix~\ref{app: ach} for details.

\begin{remark}\label{rm: central}
The quantity $C^*_{\varepsilon,\Delta}(L)$ has a direct operational meaning in the centralized model of differential privacy. In the \emph{central} DP setting, users send their raw data to a trusted server, which releases a randomized aggregate. For the additive mechanism $\sum_{i=1}^n x_i+Z_0$, where $Z_0\sim Q$, the $\varepsilon$-DP requirement for neighboring $x,\widetilde x$ is
\[
\Pr\!\left(\sum_{i=1}^n x_i + Z_0 \in A\right)
\le
e^\varepsilon
\Pr\!\left(\sum_{i=1}^n \widetilde x_i + Z_0 \in A\right).
\]
Equivalently, the one-dimensional noise law $Q$ must satisfy $\varepsilon$-DP with sensitivity $\Delta$ in the sense of Definition~\ref{def: epsDP_noise}. Hence $C^*_{\varepsilon,\Delta}(L)$ is exactly the minimum achievable centralized cost. 
\end{remark}

We state a simple scaling property of the optimal cost under dilation of the loss function, which will be used in the proof of the main theorem. The proof is deferred to Appendix~\ref{app: lemma}.

\begin{lemma}[Scaling invariance of the optimal cost]\label{lem: scaling_cost}
Let $\varepsilon,\Delta,\alpha>0$ and let $L:\mathbb{R}\to\mathbb{R}$ satisfy \textbf{(A1)}--\textbf{(A3)}. Then
\begin{align}\label{eq: scaling_cost}
 C^*_{\varepsilon,\Delta/\alpha}(L(\alpha(\cdot))) = C^*_{\varepsilon,\Delta}(L).
\end{align}
\end{lemma}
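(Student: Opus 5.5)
The plan is to exhibit a cost-preserving bijection between $\mathcal{P}_1(\varepsilon,\Delta)$ and $\mathcal{P}_1(\varepsilon,\Delta/\alpha)$ given by dilation. We then take infima on both sides. No structural result from \cite{geng2015optimal} is needed, only the definitions \eqref{eq: epsDP} and \eqref{eq: minimize}. Note also that $L(\alpha(\cdot))$ again satisfies \textbf{(A1)}--\textbf{(A3)}, since evenness, monotonicity on $\mathbb{R}_+$ and subexponential growth are preserved by dilation with $\alpha>0$. Hence the left side of \eqref{eq: scaling_cost} is well defined in the same sense as the right.

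Given $Q\in\mathcal{P}_1(\varepsilon,\Delta)$, define $Q_\alpha(A):=Q(\alpha A)$. Equivalently, $Q_\alpha$ is the law of $Z/\alpha$ when $Z\sim Q$. We first check privacy. Take any measurable $A$ and any $t\in[-\Delta/\alpha,\Delta/\alpha]$. Then $\alpha t\in[-\Delta,\Delta]$ and $\alpha(A+t)=\alpha A+\alpha t$, so
\[
Q_\alpha(A)=Q(\alpha A)\le e^{\varepsilon}Q(\alpha A+\alpha t)=e^{\varepsilon}Q_\alpha(A+t).
\]
Thus $Q_\alpha\in\mathcal{P}_1(\varepsilon,\Delta/\alpha)$. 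The same argument with $\alpha$ replaced by $1/\alpha$ shows that $R\mapsto R(\alpha^{-1}(\cdot))$ maps $\mathcal{P}_1(\varepsilon,\Delta/\alpha)$ back into $\mathcal{P}_1(\varepsilon,\Delta)$. The two maps are mutually inverse, so dilation is a bijection between the two classes.

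Next we compare costs. By the change of variables for pushforward measures, $Z\sim Q$ gives $Z/\alpha\sim Q_\alpha$, and therefore
\[
C(L(\alpha(\cdot)),Q_\alpha)=\mathbb{E}[L(\alpha\cdot Z/\alpha)]=\mathbb{E}[L(Z)]=C(L,Q).
\]
This identity holds in $[0,\infty]$ whenever $L$ is bounded below, which follows from (A1)--(A2) since $L\ge L(0)$; so no integrability issue arises.

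Since the bijection preserves cost, the two infima coincide, which gives \eqref{eq: scaling_cost}. I do not expect a real obstacle. The only point needing care is that $\alpha A$ is Borel and that $t$ ranges over the whole interval $[-\Delta/\alpha,\Delta/\alpha]$ exactly when $\alpha t$ ranges over $[-\Delta,\Delta]$. Both are immediate because $x\mapsto\alpha x$ is a homeomorphism of $\mathbb{R}$ for $\alpha>0$.
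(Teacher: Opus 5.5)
Your proof is correct and rests on the same idea as the paper. Dilation $Z\mapsto Z/\alpha$ maps $\mathcal{P}_1(\varepsilon,\Delta)$ into $\mathcal{P}_1(\varepsilon,\Delta/\alpha)$ with the cost preserved by change of variables, and replacing $\alpha$ by $1/\alpha$ gives the reverse direction. The paper applies this dilation only to the optimal densities $f^*$ from \cite{geng2015optimal} and concludes via two optimality inequalities, whereas your bijection works directly with the infimum over all Borel measures and so does not need optimizers to exist (minor slip: the costs lie in $[L(0),\infty]$ rather than $[0,\infty]$, which changes nothing).
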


\section{The main result}

Our main result characterizes the optimal cost achievable under $\varepsilon$-DP for sensitivity $\Delta$. The theorem is stated in two parts: a universal lower bound and an achievability statement.

\begin{theorem}\label{thm: main}
Let $\varepsilon,\Delta>0$ and consider an $n$-user additive-noise mechanism in which the noise vector $Z=(Z_1,\dots,Z_n)\sim P$ and the mechanism satisfies $\varepsilon$-DP for sensitivity $\Delta$. Then
\begin{align}\label{eq: impos}
\int_{\mathbb{R}^n} L\left(\sum_{i=1}^n z_i\right)\,P(dz_1\cdots dz_n)
\;\ge\; C^*_{\varepsilon,\Delta}(L)
\end{align}

Moreover, for every $\eta>0$, there exists a probability measure $\widetilde P$ on $\mathbb{R}^n$ such that the additive-noise mechanism with $Z\sim \widetilde P$ satisfies $\varepsilon$-DP for sensitivity $\Delta$ and
\begin{align}\label{eq: achieve}
\int_{\mathbb{R}^n} L\left(\sum_{i=1}^n z_i\right)\,\widetilde P(dz_1\cdots dz_n)
- C^*_{\varepsilon,\Delta}(L) \;\le\; \eta.
\end{align}

\end{theorem}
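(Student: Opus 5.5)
The lower bound \eqref{eq: impos} should follow from a pushforward argument, and the achievability \eqref{eq: achieve} from a product construction. That construction puts a near-optimal one-dimensional staircase law along the all-ones direction and very flat noise on the orthogonal (zero-sum) subspace. The flat part costs nothing, because the loss only depends on $\sum_i z_i$.

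\emph{Lower bound.} Let $P\in\mathcal{P}_n(\varepsilon,\Delta)$ and let $Q$ be the law of $S=\sum_{i=1}^n Z_i$. I will show $Q\in\mathcal{P}_1(\varepsilon,\Delta)$. Fix a Borel $B\subseteq\mathbb{R}$ and $s\in[-\Delta,\Delta]$, and set $A=\{z:\sum_i z_i\in B\}$ and $t=s e_1\in T_n^\Delta$. Then $A+t=\{z:\sum_i z_i\in B+s\}$, so \eqref{eq: epsDP} gives $Q(B)=P(A)\le e^{\varepsilon}P(A+t)=e^{\varepsilon}Q(B+s)$. Hence the cost in \eqref{eq: impos} equals $C(L,Q)\ge C^*_{\varepsilon,\Delta}(L)$ by \eqref{eq: minimize}. (Remark~\ref{rm: central} is exactly this observation.)

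\emph{Achievability.} Fix a small $\delta\in(0,\varepsilon)$ and write $z=\frac{s}{n}\mathbf{1}+u$ with $s=\sum_i z_i$ and $u\in V:=\mathbf{1}^\perp$. I will define $\widetilde P$ as the law of $\frac{S}{n}\mathbf{1}+U$, with $S$ and $U$ independent:
\begin{itemize}
\item $S$ has the staircase density $f^*_{L,\varepsilon-\delta,\Delta}$, i.e.\ the optimal law for budget $\varepsilon-\delta$;
\item $U=\sum_{j=1}^{n-1}W_j v_j$, where $v_1,\dots,v_{n-1}$ is a fixed basis of $V$ and the $W_j$ are i.i.d.\ Laplace with scale $b$.
\end{itemize}

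The privacy check goes as follows. A shift $t=ce_i$ with $|c|\le\Delta$ decomposes as $\frac{c}{n}\mathbf{1}+c(e_i-\frac1n\mathbf{1})$. This shifts $S$ by exactly $c$, which costs a density ratio of at most $e^{\varepsilon-\delta}$. It shifts the basis coordinates $W$ by a vector $w(c,i)$ with $\|w(c,i)\|_1\le K\Delta$, where the constant $K$ depends only on $n$ and the basis; this costs a ratio of at most $e^{K\Delta/b}$. Choosing $b\ge K\Delta/\delta$ makes the joint density ratio at most $e^{\varepsilon}$ pointwise, which gives \eqref{eq: epsDP} for all $A$. The cost of $\widetilde P$ is $C^*_{\varepsilon-\delta,\Delta}(L)$, since it depends only on the law of $S$.

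\emph{Main obstacle: continuity in $\varepsilon$.} It remains to show $\limsup_{\delta\downarrow0}C^*_{\varepsilon-\delta,\Delta}(L)\le C^*_{\varepsilon,\Delta}(L)$. The matching inequality $\ge$ is immediate, since $\mathcal{P}_1(\varepsilon-\delta,\Delta)\subseteq\mathcal{P}_1(\varepsilon,\Delta)$. For the $\limsup$ I would proceed in three steps.
\begin{itemize}
\item Take the optimal staircase $f_{\gamma^*}^{(\varepsilon,\Delta)}$ from \cite{geng2015optimal} and compare it with $f_{\gamma^*}^{(\varepsilon',\Delta)}$ for $\varepsilon'=\varepsilon-\delta$, keeping the same $\gamma^*$. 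The latter is admissible in $\mathcal{P}_1(\varepsilon',\Delta)$, so its cost upper-bounds $C^*_{\varepsilon',\Delta}(L)$.
\item The explicit staircase formula (geometric steps with ratio $e^{-\varepsilon'}$ and an explicit normalizing constant) is continuous in $\varepsilon'$ pointwise. It is also dominated, for $\varepsilon'\in[\varepsilon/2,\varepsilon]$, by $Ce^{-\varepsilon|x|/(4\Delta)}$.
\item By \textbf{(A3)}, $L(x)e^{-\varepsilon|x|/(4\Delta)}$ is integrable. Dominated convergence then gives that the cost of $f_{\gamma^*}^{(\varepsilon',\Delta)}$ converges to $C^*_{\varepsilon,\Delta}(L)$.
\end{itemize}
Choosing $\delta$ small enough that this gap is at most $\eta$ yields \eqref{eq: achieve}.

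If a direct dominated-convergence argument on the staircase family turns out to be awkward, an alternative is to use Lemma~\ref{lem: scaling_cost} to move the perturbation into a dilation of $L$. The resulting costs $C^*_{\varepsilon,\Delta}(L(\alpha(\cdot)))$ can then be handled by continuity of the dilated cost as $\alpha\to1$, using monotonicity from \textbf{(A2)} together with \textbf{(A3)}.
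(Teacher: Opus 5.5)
Your proposal is correct and follows the paper's proof. The lower bound is a pushforward of $P$ to the sum direction. Achievability uses a product law with a staircase at budget $\varepsilon-\delta$ along $\mathbf{1}$ and Laplace noise on $\mathbf{1}^\perp$ absorbing the leftover $\delta$, followed by dominated convergence in $\varepsilon$ with $\gamma^*$ held fixed.

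The only difference is cosmetic. By working with the unnormalized coordinate $s=\sum_i z_i$ instead of $U_1=s/\sqrt n$, you stay at sensitivity $\Delta$ throughout and never need Lemma~\ref{lem: scaling_cost}. That is just as well, because your closing fallback would not work as stated: that lemma trades $\Delta$ against a dilation of $L$ at fixed $\varepsilon$, so it cannot absorb a change in $\varepsilon$.
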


This result shows that by using a correlated local differential privacy mechanism, we can arbitrarily approximate  the optimal cost of the central DP setting, while still preventing the server from learning individual data.

Before proceeding with the proof, we introduce some notation.
Let $S$ be an orthogonal matrix whose first row is $\frac{1}{\sqrt{n}}(1,\ldots,1)$, and define $U = SZ$. Under this transformation,\vspace{-2mm}
\[
U_1 = \frac{1}{\sqrt{n}}(Z_1 + \cdots + Z_n),
\]
while the remaining coordinates span the subspace orthogonal to the all-ones vector.

The distribution of $U$ is denoted by $P^{S}$ and is obtained from $P$ via the change of coordinates induced by $S$:
\[
P^{S}(A) := P(S^{-1}A),
\qquad A \subseteq \mathbb{R}^n \text{ measurable},
\]
where $S^{-1}A=\{S^{-1}u:\,u\in A\}$.
With this notation in place, we prove the lower bound in \eqref{eq: impos}.
\begin{proof}[Proof of the lower bound]
Under the change of variables $U=SZ$, the cost depends only on the first coordinate:

\begin{align}\label{eq: cost1D} \int_{\mathbb{R}^n}\! L\left(\sum_{i=1}^n z_i\right)P(dz) &= \int_{\mathbb{R}^n} \!L(\sqrt{n}\,u_1)\,P^{S}(du)\nonumber\\ &= \int_{\mathbb{R}} \!L(\sqrt{n}\,u_1)\,P^{S}_1(du_1), \end{align}
where $P^{S}_1$ is the marginal law of $U_1$ under $P^{S}$. This yields a one-dimensional representation of the cost function.

Next, we express the $\varepsilon$-DP constraint \eqref{eq: epsDP} in the rotated coordinates. The measure $P$ satisfies \eqref{eq: epsDP} if and only if, for all measurable $A\subseteq\mathbb{R}^n$ and all $t\in T_n^\Delta$,
\begin{align}\label{eq: rotated DP}
P^{S}(A) \le e^{\varepsilon} P^{S}(A + St).
\end{align}
Since the first row of $S$ is $\frac{1}{\sqrt n}(1,\dots,1)$, for any $t\in T_n^\Delta$ we have
\[
|(St)_1|
=\frac{\big|\sum_{i=1}^n t_i\big|}{\sqrt n}
\le \frac{\Delta}{\sqrt n}.
\]
Moreover, every $t'\in[-\Delta/\sqrt n,\Delta/\sqrt n]$ can be realized as $(St)_1$ for some $t\in T_n^\Delta$. 
Now fix any measurable $A_1\subseteq\mathbb{R}$ and set $A:=A_1\times\mathbb{R}^{n-1}$. Applying \eqref{eq: rotated DP} to this set and taking the marginal of the first coordinate yields
\begin{align}\label{eq: u1DP}
P^{S}_1(A_1)\le e^{\varepsilon} P^{S}_1(A_1+t'),
\qquad |t'|\le \frac{\Delta}{\sqrt n},
\end{align}
so $P^{S}_1$ satisfies $\varepsilon$-DP for sensitivity $\Delta/\sqrt n$. 
Therefore,
\[
\int_{\mathbb{R}} L(\sqrt n\,u)\,P^{S}_1(du)
\ge C^*_{\varepsilon,\Delta/\sqrt n}\!\big(L(\sqrt n(\cdot))\big)
= C^*_{\varepsilon,\Delta}(L),
\]
where the last equality follows from Lemma~\ref{lem: scaling_cost}. Together with \eqref{eq: cost1D}, this proves the desired lower bound.\qedhere

\end{proof}

\begin{proof}[Proof (sketch) of achievability]

We work in the rotated coordinates $U=(U_1,\dots,U_n)$ to construct $\widetilde P$. Specifically, we first define the coordinate-transformed law $\widetilde P^{S}$ to be a product measure, so that $U\sim \widetilde P^{S}$ has independent components. We choose $U_1$ to have a density close to $f^*_{L(\sqrt n(\cdot)),\varepsilon,\Delta/\sqrt n}$ but satisfying a stricter $\varepsilon_0$-DP constraint for some $\varepsilon_0<\varepsilon$, and take $U_2,\dots,U_n$ to have sufficiently slowly decaying densities. Intuitively, $U_1$ controls the aggregate $\sum_i Z_i$ (and hence the cost), while the remaining coordinates provide enough ``slack'' to absorb the remaining privacy budget $\varepsilon-\varepsilon_0$ under shifts in $ST_n^\Delta$. Consequently, $\widetilde P$ satisfies $\varepsilon$-DP, and letting $\varepsilon_0\uparrow \varepsilon$ yields cost arbitrarily close to the lower bound. A rigorous proof is given in Appendix~\ref{app: ach}.
\end{proof}
\begin{figure}[h]
\centering
\includegraphics[width=0.75\columnwidth]{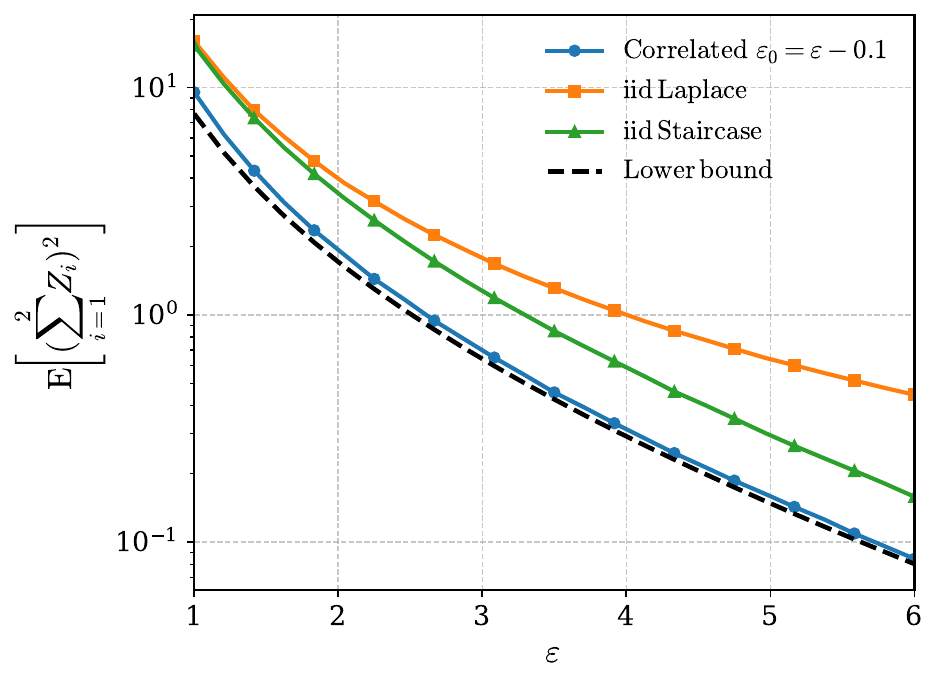}
\caption{Quadratic loss  vs $\varepsilon$ for different DP schemes.}
\label{fig:benchmark}
\end{figure}

We numerically evaluate the two-user case ($n=2$) with sensitivity $\Delta=2$. 
Figure~\ref{fig:benchmark} compares the aggregate squared loss under several mechanisms; the dashed black curve shows the lower bound $C^*_{\varepsilon,\Delta}$.
All other curves are estimated from $5\times 10^5$ samples of $(Z_1,Z_2)$.

The orange curve corresponds to independent $\mathrm{Laplace}(\Delta/\varepsilon)$ noise\footnote{$\mathrm{Laplace}(\lambda)$ has density $\frac{1}{2\lambda}e^{-|x|/\lambda}$, and $\mathrm{Laplace}(\Delta/\varepsilon)$ is $\varepsilon$-DP for sensitivity $\Delta$ \cite{dwork2006calibrating}.}, 
the green curve to the product of one-dimensional optimal densities $f^*_{L,\varepsilon,\Delta}$, and the blue curve to our correlated construction $\widetilde P$ with $\varepsilon_0=\varepsilon-0.1$ (specified in Appendix~\ref{app: ach}). 
These results empirically validate the optimality of the correlated construction relative to standard local DP schemes.

\section{Conclusion}

We showed that the utility gap between local and centralized differential privacy is not inherent to locality, but to the restriction to independent noise. By introducing carefully designed correlations among locally added noise variables, we constructed $\varepsilon$-DP mechanisms whose aggregate cost can approach the optimal centralized cost arbitrarily closely.

Our approach extends beyond pure $\varepsilon$-DP: the lower-bound and scaling arguments carry over to approximate DP and R\'enyi DP \cite{mironov2017renyi}, and achievability should persist when the relevant one-dimensional optimizers are stable under small parameter changes. 

Our analysis also extends to vector-valued data. Under $\ell_\infty$ sensitivity, the problem decouples coordinatewise and reduces to the scalar setting studied here. When sensitivity is measured using other norms, similar lower-bound continue to apply. Moreover, recent results on optimal mechanisms for vector-valued sensitivities \cite{kulesza2025general} suggest that our construction can be extended as well.

An important direction for future work is the efficient and secure generation of the required correlations. In the approximate DP setting, correlated Gaussian schemes use independent noise combined with pairwise anticorrelated components that cancel upon aggregation \cite{bonawitz2017practical,chen2022fundamental,vithana2025correlated}. While similar ideas may be relevant here, our construction is not directly reducible to such a decomposition, as the resulting aggregate noise cannot generally be expressed as a sum of independent random variables.

\bibliographystyle{IEEEtran}
\bibliography{corr}

\newpage
\appendices

\section{Proof of Lemma~\ref{lem: scaling_cost}}\label{app: lemma}

For convenience, we restate Lemma~\ref{lem: scaling_cost}.

\begin{lemma*}[Scaling invariance of the optimal cost]
Let $\varepsilon,\Delta,\alpha>0$ and let $L:\mathbb{R}\to\mathbb{R}$ satisfy \textbf{(A1)}--\textbf{(A3)}. Then
\[
C^*_{\varepsilon,\Delta/\alpha}(L(\alpha(\cdot))) = C^*_{\varepsilon,\Delta}(L).
\]
\end{lemma*}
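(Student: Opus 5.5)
The plan is to prove the lemma by exhibiting a cost-preserving bijection between the two feasible classes $\mathcal{P}_1(\varepsilon,\Delta)$ and $\mathcal{P}_1(\varepsilon,\Delta/\alpha)$, induced by the dilation $x\mapsto x/\alpha$. Given a probability measure $Q$ on $\mathbb{R}$, define its rescaled version $Q_\alpha(A):=Q(\alpha A)$ for measurable $A\subseteq\mathbb{R}$. In the paper's notation this is $Q(\alpha(\cdot))$, which is the law of $Y/\alpha$ when $Y\sim Q$. The map $Q\mapsto Q_\alpha$ is clearly a bijection on Borel probability measures, with inverse $Q'\mapsto Q'((1/\alpha)(\cdot))$.

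\textbf{Privacy is preserved.} Suppose $Q\in\mathcal{P}_1(\varepsilon,\Delta)$. Take any measurable $A$ and any $t$ with $|t|\le\Delta/\alpha$. Then
\[
Q_\alpha(A)=Q(\alpha A)\le e^{\varepsilon}Q(\alpha A+\alpha t)=e^{\varepsilon}Q(\alpha(A+t))=e^{\varepsilon}Q_\alpha(A+t),
\]
where the inequality uses $|\alpha t|\le\Delta$. Hence $Q_\alpha\in\mathcal{P}_1(\varepsilon,\Delta/\alpha)$. The same computation with $\alpha$ replaced by $1/\alpha$ gives the reverse inclusion. The only facts used are that $\alpha A$ is measurable and that $\alpha(A+t)=\alpha A+\alpha t$.

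\textbf{Cost is preserved.} By the change-of-variables formula for pushforward measures,
\[
C(L(\alpha(\cdot)),Q_\alpha)=\int L(\alpha x)\,Q_\alpha(dx)=\int L(y)\,Q(dy)=C(L,Q).
\]
This identity holds in $[0,\infty]$ whenever $L$ is bounded below, which follows from (A1)–(A2): $L$ is even and nondecreasing on $\mathbb{R}_+$, so $L\ge L(0)$. If one prefers, $L(0)$ can be subtracted from $L$ first.

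\textbf{Conclusion.} Because $Q\mapsto Q_\alpha$ is a cost-preserving bijection between the two feasible sets, taking infima on both sides gives $C^*_{\varepsilon,\Delta/\alpha}(L(\alpha(\cdot)))=C^*_{\varepsilon,\Delta}(L)$. The infimum identity holds regardless of whether optimizers exist. We also note that $L(\alpha(\cdot))$ again satisfies (A1)–(A3), so the quantity on the left is well defined within the paper's framework and the result of \cite{geng2015optimal} applies to it.

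There is no real obstacle here. The only point needing care is the measure-theoretic bookkeeping: checking that dilation maps Borel sets to Borel sets, and that the integrals may be infinite without breaking the argument.
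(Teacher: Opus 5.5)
Your proof is correct and follows the paper's argument: the dilation $x\mapsto x/\alpha$ maps $\varepsilon$-DP noise laws for sensitivity $\Delta$ to $\varepsilon$-DP laws for sensitivity $\Delta/\alpha$ without changing the cost, and applying it in both directions gives equality. The only difference is that the paper applies the dilation to the optimal densities $f^*$ from \cite{geng2015optimal} and derives two inequalities, while you work directly with arbitrary measures and their infima, so your version does not need optimizers to exist.
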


\begin{proof}

We need to prove the following equality:
\begin{align*}
\int_{\mathbb{R}} L(\alpha x)\, f^*_{L(\alpha(\cdot)),\varepsilon,\Delta/\alpha}(x)dx
=
\int_{\mathbb{R}} L(x) f^*_{L,\varepsilon,\Delta}(x)dx .
\end{align*}

Observe that if a density $g$ satisfies $\varepsilon$-DP for sensitivity $\Delta$, then the scaled density $x\mapsto \alpha g(\alpha x)$ satisfies $\varepsilon$-DP for sensitivity $\Delta/\alpha$. In particular, $\alpha f^*_{L,\varepsilon,\Delta}(\alpha x)$ is feasible for the optimization defining $f^*_{L(\alpha(\cdot)),\varepsilon,\Delta/\alpha}$.

By optimality,\vspace{-1mm}
\begin{align*}
\int_{\mathbb{R}} L(\alpha x) f^*_{L(\alpha(\cdot)),\varepsilon,\Delta/\alpha}(x)dx
& \le \!
\int_{\mathbb{R}}\! L(\alpha x)\alpha f^*_{L,\varepsilon,\Delta}(\alpha x) dx \\
&=
\int_{\mathbb{R}} L(u) f^*_{L,\varepsilon,\Delta}(u)du,
\end{align*}
where the last equality follows from the change of variables $u=\alpha x$.

Applying the same argument with $\alpha$ replaced by $1/\alpha$ yields the reverse inequality, which proves \eqref{eq: scaling_cost}.
\end{proof}

\section{Proof of achievability of Theorem \ref{thm: main}} \label{app: ach}

For the reader's convenience, we restate the achievability claim in Theorem~\ref{thm: main}:

\emph{For every $\eta>0$, there exists a probability measure $\widetilde P$ on $\mathbb{R}^n$ such that the additive-noise mechanism with $Z\sim \widetilde P$ satisfies $\varepsilon$-DP for sensitivity $\Delta$ and}
\begin{align}\label{eq: achieve2}
\int_{\mathbb{R}^n} L\!\left(\sum_{i=1}^n z_i\right)\,\widetilde P(dz_1\cdots dz_n)
- C^*_{\varepsilon,\Delta}(L) \;\le\; \eta.
\end{align}

To prove achievability, we construct a probability measure $\widetilde P$ on $\mathbb{R}^n$ that satisfies $\varepsilon$-DP for sensitivity $\Delta$ and whose cost approaches the lower bound in \eqref{eq: impos}. 

Our construction relies on the \emph{staircase} densities of \cite{geng2015optimal}, named for their piecewise-constant form. They are parametrized by $(\varepsilon,\Delta)$ and a shape parameter $\gamma\in[0,1]$. For notational simplicity, we suppress $(\varepsilon,\Delta)$ and write $f_\gamma$ for $f_\gamma^{(\varepsilon,\Delta)}$:
\begin{align}\label{eq: stair}
f_\gamma(x)\! =\!
\begin{cases}
a_\gamma, & \!\!\!\! 0 \le x < \gamma\Delta,\\
e^{-\varepsilon} a_\gamma, &\!\!\!\! \gamma\Delta \le x < \Delta,\\
e^{-k\varepsilon} f_\gamma(x-k\Delta), 
&\!\!\!\! k\Delta \le x \!< (k+1)\Delta, k\in\mathbb{N},\\
f_\gamma(-x), &\!\!\!\! x<0,
\end{cases}
\end{align}
where $a_\gamma:=a(\gamma,\varepsilon,\Delta)$ is the normalizing constant. Under Assumptions \textbf{(A1)}--\textbf{(A3)}, \cite{geng2015optimal} shows that the optimizer $f^*_{L,\varepsilon,\Delta}$ belongs to this family.

Armed with this characterization, we now prove achievability.
\begin{proof}
Let $\gamma^*$ be such that
\begin{align}\label{eq: opt_gamma}
    f^*_{L(\sqrt n(\cdot)),\varepsilon,\Delta/\sqrt n}
    =
    f_{\gamma^*}^{(\varepsilon,\Delta/\sqrt n)} .
\end{align}
That is, $\gamma^*$ indexes the staircase density that is optimal for privacy level $\varepsilon$ and sensitivity $\Delta/\sqrt{n}$.

Fix $\varepsilon_0\in(0,\varepsilon]$. 
Define $\widetilde P$ via its pushforward under the orthogonal transform $S$ as
\[
\widetilde P^{S}(d(u_1,\dots,u_n))
=
f_{\gamma^*}^{(\varepsilon_0,\Delta/\sqrt n)}(u_1)\, g(u_2)\cdots g(u_n),
\]
where $g$ is a density on $\mathbb{R}$ satisfying $\frac{\varepsilon-\varepsilon_0}{n-1}$-DP for sensitivity $\Delta$. For example, one may take $g$ to be $\mathrm{Laplace}(\lambda)$ with $\lambda=\frac{\Delta(n-1)}{\varepsilon-\varepsilon_0}$.\footnote{In our implementation of $\widetilde P$ in Figure \ref{fig:benchmark}, we take $g$ to be Laplace with this choice of $\lambda$.}

First, we need to verify that this choice satisfies the $\varepsilon$-DP constraint in \eqref{eq: rotated DP}. Observe that for any shift
\begin{align}\label{eq: cuboid}
    \widetilde t\in[-\Delta/\sqrt n,\Delta/\sqrt n]\times[-\Delta,\Delta]^{n-1},
\end{align}
we have
\begin{align*}
& f_{\gamma^*}^{(\varepsilon_0,\Delta/\sqrt n)}(u_1)\, g(u_2)\cdots g(u_n)\\
& \quad \le
e^{\varepsilon_0}\, f_{\gamma^*}^{(\varepsilon_0,\Delta/\sqrt n)}(u_1+\widetilde t_1)\;
\prod_{i=2}^n e^{\frac{\varepsilon-\varepsilon_0}{n-1}}\, g(u_i+\widetilde t_i) \\
& \quad =
e^{\varepsilon}\, f_{\gamma^*}^{(\varepsilon_0,\Delta/\sqrt n)}(u_1+\widetilde t_1)\, g(u_2+\widetilde t_2)\cdots g(u_n+\widetilde t_n).
\end{align*}

Let $ST_n^\Delta$ denote the image of $T_n^\Delta$ under $S$, i.e.,
$ST_n^\Delta := \{Sx:\,x\in T_n^\Delta\}$.
If we show that $S T_n^\Delta$ is contained in the cuboid in \eqref{eq: cuboid}, then the $\varepsilon$-DP claim follows from the characterization in \eqref{eq: rotated DP}.

Indeed, by \eqref{eq: u1DP} we have $(St)_1\in[-\Delta/\sqrt n,\Delta/\sqrt n]$ for all $t\in T_n^\Delta$. Moreover, since $\|t\|_2\le \Delta$ and $S$ is orthogonal, $\|St\|_2=\|t\|_2\le \Delta$, which implies $|(St)_i|\le \Delta$ for every $i\ge 2$. Hence $ST_n^\Delta \subset [-\Delta/\sqrt n,\Delta/\sqrt n]\times[-\Delta,\Delta]^{n-1}$ and therefore $\widetilde P$ satisfies $\varepsilon$-DP for sensitivity $\Delta/\sqrt{n}$.

We now compute the cost under $\widetilde P$. As in \eqref{eq: cost1D},
\[
\int_{\mathbb{R}^n} L\big(\sum_{i=1}^n z_i\big)\,\widetilde P(dz)
=
\int_{\mathbb{R}} L(\sqrt n\,u)\, f_{\gamma^*}^{(\varepsilon_0,\Delta/\sqrt n)}(u)\,du.
\]

It remains to show that, for a suitable choice of $\varepsilon_0$, the right-hand-side integral approximates $C^*_{\varepsilon,\Delta}(L)$. Recall that
\begin{align*}
   C^*_{\varepsilon,\Delta}(L)
   &= C^*_{\varepsilon,\Delta/\sqrt n}(L(\sqrt n(\cdot)))\\
   &= \int_{\mathbb{R}} L(\sqrt{n}\,u)\, f_{\gamma^*}^{(\varepsilon,\Delta/\sqrt n)}(u)\,du,
\end{align*}
where the first equality follows from Lemma~\ref{lem: scaling_cost}, and the second one follows from \eqref{eq: opt_gamma}.

We prove the approximation via the dominated convergence theorem. By \eqref{eq: stair}, 
$f_{\gamma^*}^{(\varepsilon_0,\Delta/\sqrt n)}$ converges to $f_{\gamma^*}^{(\varepsilon,\Delta/\sqrt n)}$ pointwise as $\varepsilon_0\to\varepsilon$, and hence
$L(\sqrt n\ (\cdot) )\,f_{\gamma^*}^{(\varepsilon_0,\Delta/\sqrt n)}$ converges to
$L(\sqrt n (\cdot))\,f_{\gamma^*}^{(\varepsilon,\Delta/\sqrt n)}$ pointwise.

Moreover, the staircase characterization \eqref{eq: stair} implies a uniform envelope: for
$\varepsilon_0\in(\varepsilon/2,\varepsilon]$ the densities $f_{\gamma^*}^{(\varepsilon_0,\Delta/\sqrt n)}$ are dominated by an even function $h$ with geometric tail decay. Since $L$ has at most subexponential growth, we have
$\int_{\mathbb{R}} L(\sqrt n\,u)\,h(u)\,du<\infty$.
Therefore, dominated convergence yields
\begin{align}\label{eq: convergence}
    &\int_{\mathbb{R}} L(\sqrt n\,u) f_{\gamma^*}^{(\varepsilon_0,\Delta/\sqrt n)}(u)du\nonumber\\
& \quad \longrightarrow 
\int_{\mathbb{R}} L(\sqrt n\,u)\, f_{\gamma^*}^{(\varepsilon,\Delta/\sqrt n)}(u)du
\quad \text{as }\varepsilon_0\to\varepsilon,
\end{align}
and hence we can choose $\varepsilon_0$ sufficiently close to $\varepsilon$ so that the absolute difference is at most $\eta$.
\end{proof}
We visualize the asymptotic convergence of $\widetilde{P}$ by numerically simulating two users ($n = 2$) with $\Delta = 2$. The samples of $Z$ are generated where the choice for $g$ is a $\mathrm{Laplace}(\frac{\Delta}{\varepsilon-\varepsilon_0})$.
 We plot the quadratic loss versus $\varepsilon$ for various choices of $\varepsilon_0$ and observe that as $\varepsilon_0$ approaches $\varepsilon$, the quadratic loss decreases and is closer to the theoretical lower bound $C^*_{\varepsilon,\Delta}(L)$.

\begin{figure}[h]
\centering
\includegraphics[width=0.7\columnwidth]{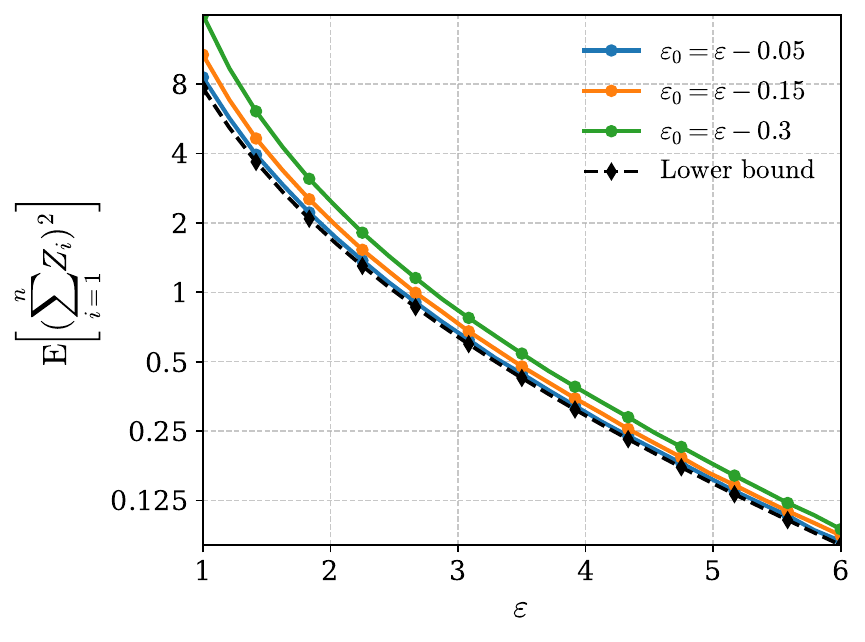}
\caption{quadratic loss vs $\varepsilon$ for different $\varepsilon_0$.}
\label{fig:conv1}
\end{figure}

Figure~\ref{fig: correlated} illustrates the density of $\widetilde P$ for $n=2$.
\begin{figure}[h]
\centering
\includegraphics[width=0.91\columnwidth]{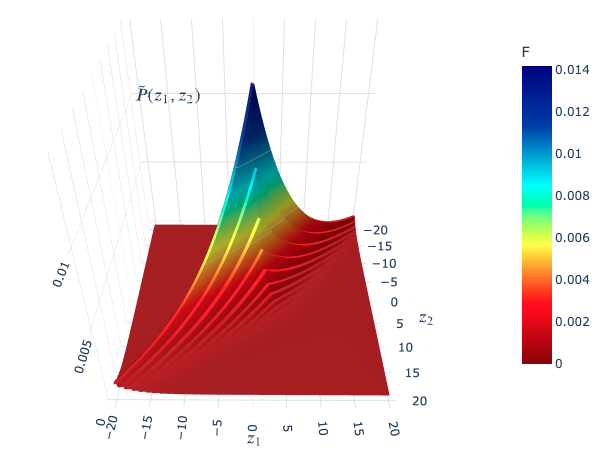}
    \caption{Density of $\Tilde P$}
\label{fig: correlated}
\end{figure}

The plot corresponds to the two-user case with $\varepsilon=0.5$, $\Delta=1$, and $\varepsilon_0=0.4$, where $g$ is chosen as $\mathrm{Laplace}\!\big(\Delta/(\varepsilon-\varepsilon_0)\big)$. As expected, the mass concentrates near the line $z_1+z_2=0$ (i.e., $u_1=0$) and exhibits staircase decay along $z_1=z_2$, equivalently along the $u_1$ direction.

\end{document}